\theoremstyle{definition}
\newtheorem{theorem}{Theorem}[section]
\newtheorem{proposition}[theorem]{Proposition}
\newtheorem{lemma}[theorem]{Lemma}
\newtheorem{corollary}[theorem]{Corollary}
\newtheorem{example}{Example}
\DeclareMathOperator*{\argmin}{arg\,min}
\newcommand{\dd}{\mathrm{d}}
\newcommand{\ee}{\mathrm{e}}
\newcommand{\mm}{\mathrm{m}}
\newcommand{\iter}[2]{{#1}^{(#2)}}
\newcommand{\thetait}[1]{\iter{\theta}{#1}}
\newcommand{\etait}[1]{\iter{\eta}{#1}}
\begin{document}
\begin{center}
{Minimization of Functions on Dually Flat Spaces \\
Using Geodesic Descent Based on Dual Connections} \\

\vspace{0.5cm}
{\large
Gaku Omiya$^{1}$  and\
Fumiyasu Komaki$^{1,2}$
}

\vspace{1em}
$^1$ Department of Mathematical Informatics, The University of Tokyo, Tokyo, Japan \\
$^2$
RIKEN Center for Brain Science (CBS), Wako, Saitama, Japan
\end{center}
\begin{abstract}
We propose geodesic-based optimization methods on dually flat spaces,
where the geometric structure of the parameter manifold is closely related
to the form of the objective function.
A primary application is maximum likelihood estimation in statistical models,
especially exponential families, whose model manifolds are dually flat.
We show that an m-geodesic update, which directly optimizes the log-likelihood,
can theoretically reach the maximum likelihood estimator in a single step.
In contrast, an e-geodesic update has a practical advantage in cases
where the parameter space is geodesically complete,
allowing optimization without explicitly handling parameter constraints.
We establish the theoretical properties of the proposed methods
and validate their effectiveness through numerical experiments.
\end{abstract}

\noindent
\textbf{Keywords:}
Exponential family; Geodesic optimization; 
Information geometry; Maximum likelihood estimation; 
Mixture family; Natural gradient

\section{Introduction}
In optimization problems on Riemannian manifolds, we consider the following problem
\begin{equation}
    \label{optmfd}
\min_{x \in M} f(x).
\end{equation}
based on a Riemannian manifold $(M,g)$ and a smooth function $f$ defined on $M$.
For instance, by setting 
$M = S^{n-1}$, the $(n-1)$-dimensional unit sphere in $\mathbb{R}^n$,
and equipping $S^{n-1}$ with the induced metric from the Euclidean metric,
$S^{n-1}$ becomes a Riemannian manifold.
Furthermore, if we take $A$ as a symmetric matrix and define $f(x)\coloneqq x^\top Ax$, then problem (\ref{optmfd}) corresponds to finding the smallest eigenvalue of $A$.
In optimization problems on Riemannian manifolds of this kind,
optimization methods utilizing geodesics have been widely employed. 
These methods primarily rely on geodesics based on the Levi-Civita connection. 
More specifically, using the geodesic $\gamma^{(0)}$ determined by the Levi-Civita connection with initial conditions $\gamma^{(0)}(0)=\iter{x}{k},\dot{\gamma^{(0)}}(0) = - \operatorname{grad} f(\iter{x}{k})$, the update formula $x_{k+1} = \gamma^{(0)}(t_k)$ is commonly used \cite{smith1994}.

On the other hand, dually flat spaces $(M,g,\nabla,\nabla^{*})$, 
studied in information geometry and including important examples such as exponential families of probability distributions, 
admit two important dual connections, $\nabla$ and $\nabla^{*}$, which generally differ from the Levi-Civita connection.
The geodesics based on these dual connections are natural in the context of dually flat spaces and offer computational advantages. 
However, despite these benefits, optimization methods utilizing geodesics based on dual connections have not been extensively studied.

In this paper, we study optimization problems on dually flat spaces and investigate the properties that hold when updating in a manner analogous to the geodesic update based on the Levi-Civita connection, but using the geodesics 
$\gamma$ and $\gamma^*$ associated with the two dual connections, $\nabla$ and $\nabla^*$.
In particular, when the objective function is closely related to the geometric structure of the model manifold, such as likelihood or Kullback–Leibler divergence, our method is expected to be especially meaningful.

In Section 2, we review the notation and basic concepts for the subsequent sections.
In Section 3, we present results that do not depend on a specific objective function and highlight the relationship between our approach and existing methods.
In Section 4, we examine commonly used objective functions in statistics, such as the Kullback–Leibler divergence and the likelihood function, and discuss the properties of our optimization method for such objective functions.
In Section 5, we present numerical experiments for multiple examples.

\section{Geometric and Statistical Preliminaries}
In this section, we introduce the geometric and statistical concepts and assumptions 
used throughout the paper. 
For more details on differential geometry and information geometry, 
see Amari and Nagaoka \cite{amari2000}.

Throughout this paper, we consider $C^{\infty}$ manifolds. 
Unless otherwise specified, we assume that the dimension of a manifold $M$ is $m$.
Given a local coordinate system
$\xi =(\xi^1,\dots,\xi^m)$ on $M$, the differential map is defined as
$(\mathrm{d} \xi)_p \colon T_p M \to \mathbb{R}^m,
\sum_{i=1}^m c^i \bigl(\frac{\partial}{\partial \xi^i}\bigr)_p
\mapsto (c^1,\dots,c^m)^\top$.

For a Riemannian manifold $(M, g)$ and a smooth function $f \in C^\infty(M)$, 
the Riemannian gradient $\operatorname{grad} f(x) \in T_x M$ at a point $x \in M$ 
is defined in local coordinates by
\begin{equation}
\operatorname{grad} f
= \sum_{i=1}^m
\sum_{j=1}^m g^{ij}
\frac{\partial f}{\partial \xi^i}\, 
\frac{\partial}{\partial \xi^j}.
\end{equation}

The Christoffel symbols $\Gamma^{~~k}_{ij}$ of an affine connection $\nabla$ with respect to the coordinate system $\xi = (\xi^1, \dots, \xi^m)$ are defined by  
\begin{equation} \label{Christoffel}
\nabla_{\frac{\partial}{\partial \xi^i}}
\frac{\partial}{\partial \xi^j}
= \sum_{k=1}^m \Gamma^{~~k}_{ij} \frac{\partial}{\partial \xi^k}.
\end{equation}
If there exists a coordinate system $\xi$ with which the connection coefficients
$\Gamma_{ij}^{\; k}$ vanish on an open neighborhood, then the manifold is said
to be flat with respect to $\nabla$.
We consider several different affine connections.

A geodesic on $M$ is a smooth curve $\gamma \colon I \to M$ 
(with $I \subset \mathbb{R}$ an interval) that satisfies
\begin{equation}
 \label{eq:geod1}
 \nabla_{\dot{\gamma}} \dot{\gamma} = 0,
\end{equation}
where $\nabla_{\dot{\gamma}}$ denotes the covariant derivative along $\gamma$.  
For a coordinate expression $\xi(\gamma(t)) = (\xi^1(t), \dots, \xi^m(t))$,
\eqref{eq:geod1} is represented as
\begin{equation}
 \frac{\dd^2 \xi^i}{\dd t^2}(t) + \sum_{j=1}^m \sum_{k=1}^m (\Gamma^{~~i}_{jk} \circ \gamma)(t)\, 
 \frac{\dd \xi^j}{\dd t}(t)\, 
 \frac{\dd \xi^k}{\dd t}(t) = 0
 ~~~~~ (i = 1, \dots, m).
\end{equation}

In information geometry, manifolds $(M,g,\nabla,\nabla^{*})$ equipped with dual affine connections are considered. 
If $(M,g,\nabla,\nabla^{*})$ is flat with respect to $\nabla$, then it is also flat with respect to $\nabla^{*}$. 
Such a structure is called a dually flat space. 
For any $p \in M$, there exists a neighborhood $U$ with an affine coordinate system $\theta = (\theta^1,\ldots,\theta^m)$
satisfying $\Gamma_{ij}^{~~k} = 0$
for $\nabla$, together with a convex potential function $\psi(\theta)$ such that
\[
 \eta_i = \frac{\partial}{\partial \theta^i}\psi(\theta), \quad i=1,\ldots,m,
\]
defines affine coordinates $\eta = (\eta_1,\ldots,\eta_m)$ for $\nabla^{*}$. 
To properly represent duality, we denote $\theta^i$ with upper indices and $\eta_i$ with lower indices.
The function $\psi(\theta)$ is called the potential function of the dually flat space. 
The components of the Riemannian metric in $\theta$-coordinates are
\[
 g_{ij} = \frac{\partial^2}{\partial \theta^i \partial \theta^j}\psi(\theta),
\]
and the components in $\eta$-coordinates form the inverse matrix $(g^{ij}) = (g_{ij})^{-1}$.

The Bregman divergence from a point $r$ to a point $q$ in $U$
is defined using
the potential function $\psi$ with respect to an affine coordinate system
$\theta$ corresponding to the $\nabla$-connection that is dual to $g$, and the
potential function $\varphi$ with respect to $\eta$, as
\[
B_\phi(r, q) \coloneqq \psi(\theta(r)) + \varphi(\eta(q))
- \sum_{j=1}^m \theta^j(r) \eta_j(q),
\]
where
\[
\varphi(\eta) \coloneqq \sum_i \theta^i \eta_i
- \psi(\theta(\eta)).
\]
This is also called the canonical divergence of a dually flat space.

Exponential families provide an important example. 
Equipped with the Fisher metric $g$, the e-connection $\nabla^{(\ee)}$, and the $m$-connection $\nabla^{(\mm)}$, an exponential family $(S,g,\nabla^{(\ee)},\nabla^{(\mm)})$ constitutes a dually flat space. 
Its density function can be written, with respect to a base measure, as
\[
 \exp \Biggl\{ \sum_{i=1}^m \theta^i T_i(x) - \psi(\theta) \Biggr\},
\]
where $T(x) = (T_1(x),\ldots,T_m(x))$ is the sufficient statistic. 
Here, $\theta$ is the natural parameter
which is an affine coordinate for $\nabla^{(\ee)}$,
$\eta_i(\theta) \coloneqq \mathrm{E}_\theta[T_i(x)]$
is the expectation parameter, and $\psi(\theta)$ the potential function of the dually flat structure.
Given an observation $x = (x_1, \ldots, x_m)$, the maximum likelihood estimator of $\eta$ is given by $\hat{\eta}_i = x_i$ for $i = 1, \ldots, m$.
The Kullback--Leibler divergence $\mathrm{KL}(r,q)$
from $r$ to $q$ coincides with the Bregman divergence $B_\psi(r,q)$.

In statistics, most exponential families used in practice belong to the class of regular exponential families, see \cite{BN1978} p.~116.  
Every distribution in a regular exponential family $S$ can be represented by its natural parameter $\theta$, where the parameter space $\theta(S) \subset \mathbb{R}^m$ is open and convex.  
Therefore, when applying the results in the following sections, it is usually sufficient to consider a single coordinate system that covers the entire family $S$, without the need for multiple local coordinate charts.
Moreover, when $\theta(S) = \mathbb{R}^m$, one can perform descent along e-geodesics without having to consider any constraints on the parameter space. For example, it is known that this holds when the family of distributions has finite support.

\vspace{0.5cm}
\noindent
Exapmle.

A random variable $X$ taking one of the values $1,\ldots,m+1$ with probabilities $\eta_i>0$ $(i=1,\ldots,m+1)$ is said to follow a categorical distribution, i.e.\ the one-trial $(m+1)$-nomial distribution. The family of categorical distributions is
\begin{align}
\label{multi}
S_{m+1} \coloneqq \Bigl\{ (\eta_1,\ldots,\eta_m)
\, \Big| \, \eta_i > 0 ~ (i=1,\ldots,m), ~
\eta_{m+1} \coloneqq 1 - \sum_{i=1}^m \eta_i > 0
\Bigr\}.
\end{align}
We refer to this family as the categorical distribution model. The unknown parameter is $\eta = (\eta_1,\ldots,\eta_m)$.

Given $N$ independent observations $x_1,\ldots,x_N$ from a categorical distribution, the log-likelihood is
\begin{align}
\label{eq:likelihood3}
\sum_{l=1}^N \log p(x_l \mid \eta) &= \sum_{l=1}^N \log \eta_{x_l}
=  \sum_{i=1}^{m+1} N T_i \log \eta_i
=  \sum_{i=1}^m N T_i \log \frac{\eta_i}{\eta_{m+1}}
+ N \log \eta_{m+1} \notag \\
&= N \Biggl\{ \sum_{i=1}^m T_i \log \theta^i
- \log \biggl( \sum_{i=1}^m \exp \theta^i + 1 \biggr) \Biggr\},
\end{align}
where $T_i \coloneqq \sum_{l=1}^N \delta_i(x_l)/N$ $(i=1,\ldots,m)$, $\delta_i(x_l)$ denotes the indicator function that equals $1$ when $x_l=i$ and $0$ otherwise, and $\theta^i = \log (\eta_i/\eta_{m+1})$. The potential function is $\psi(\theta) = \log \bigl( \sum_{i=1}^m \exp \theta^i + 1 \bigr)$.

The vector $(NT_1,\ldots,NT_{m+1})$ follows the $(m+1)$-nomial distribution with $N$ trials, and the maximum likelihood estimator of $\eta_i$ is given by $T_i$ $(i=1,\ldots,m)$.
When $T_i \geq 1$ for all $i \in \{1,\ldots,m+1 \}$,
the distribution $p(x \mid \hat{\eta})$
obtained by plugging in the maximum likelihood estimate belongs to the model $S_{m+1}$.
If $T_i = 0$ for some $i$, the distribution obtained by plugging in the maximum likelihood estimate lies on the boundary of the model and does not belong to the model itself.
\qed

\section{Optimization on Dually Flat Spaces}

In this section, we consider the minimization of a function $f$ on a dually flat space $(M, g, \nabla, \nabla^{*})$.
While many existing studies on function minimization on Riemannian manifolds employ descent methods along geodesics
based on the Levi-Civita connection associated with the Riemannian metric,
here we instead investigate optimization methods that perform descent along geodesics induced by the dual connection.
We analyze such descent methods on dually flat spaces and derive several of their fundamental properties.

We introduce an update formula based on dual connections.

\vspace{0.5cm}
\begin{lemma}\label{双対平坦空間における測地線}
Let $f \colon U \to \mathbb{R}$ be a $C^{\infty}$ function, and let $\gamma$ be
a $\nabla$-geodesic satisfying $\gamma(0)=\iter{p}{k}$ and
$\dot{\gamma}(0)=-\operatorname{grad}f(\iter{p}{k})$.
Assume that $\iter{t}{k}>0$ is chosen so that
$\gamma(s) \in U$ for all $s \in [0,\iter{t}{k}]$, and define
$\iter{p}{k+1} \coloneqq \gamma(\iter{t}{k})$.
Then we have
\begin{equation}
 \label{egeo2}
 \theta(\iter{p}{k+1})
 = \theta(\iter{p}{k})
 - t_k\, D_{\eta} f(\iter{p}{k}),
\end{equation}
where
$D_\eta f(\iter{p}{k}) \coloneqq
\left(
\frac{\partial f}{\partial \eta_1}(\iter{p}{k}),
\dots,
\frac{\partial f}{\partial \eta_m}(\iter{p}{k})
\right)$.
\hfill \qed
\end{lemma}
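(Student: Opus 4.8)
The plan is to carry out the entire argument in the $\theta$-coordinate system, which by construction is affine for $\nabla$. First I would recall that in affine coordinates for $\nabla$ the Christoffel symbols $\Gamma_{ij}^{~~k}$ vanish, so the geodesic equation \eqref{eq:geod1} collapses to $\tfrac{\dd^2}{\dd t^2}\theta^i(\gamma(t)) = 0$. Hence a $\nabla$-geodesic is an affine line in $\theta$-coordinates, and the prescribed initial conditions $\gamma(0)=\iter{p}{k}$, $\dot\gamma(0)=-\operatorname{grad}f(\iter{p}{k})$ give
\[
\theta(\gamma(t)) = \theta(\iter{p}{k}) + t\,(\dd\theta)_{\iter{p}{k}}\!\bigl(\dot\gamma(0)\bigr)
= \theta(\iter{p}{k}) - t\,(\dd\theta)_{\iter{p}{k}}\!\bigl(\operatorname{grad}f(\iter{p}{k})\bigr).
\]
Evaluating at $t=\iter{t}{k}$, the statement reduces to the single claim $(\dd\theta)_{\iter{p}{k}}\!\bigl(\operatorname{grad}f(\iter{p}{k})\bigr) = D_\eta f(\iter{p}{k})$.

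To prove that claim I would expand the Riemannian gradient in $\theta$-coordinates by its definition, $\operatorname{grad}f = \sum_{i,j} g^{ij}\,\tfrac{\partial f}{\partial\theta^i}\,\tfrac{\partial}{\partial\theta^j}$, so the $k$-th component of $(\dd\theta)(\operatorname{grad}f)$ is $\sum_i g^{ik}\,\tfrac{\partial f}{\partial\theta^i}$. The bridge to $\eta$-derivatives is the chain rule combined with the identity $\tfrac{\partial\eta_i}{\partial\theta^j} = \tfrac{\partial^2\psi}{\partial\theta^i\partial\theta^j} = g_{ij}$, which gives $\tfrac{\partial f}{\partial\theta^i} = \sum_\ell g_{i\ell}\,\tfrac{\partial f}{\partial\eta_\ell}$. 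Substituting this in and using $\sum_i g^{ik} g_{i\ell} = \delta^k_\ell$ collapses the double sum to $\tfrac{\partial f}{\partial\eta_k}$, which is exactly the $k$-th entry of $D_\eta f$; assembling the components finishes the proof.

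I do not expect a genuine obstacle: this is a short index computation. The only place requiring care is the bookkeeping of upper versus lower indices and correctly pairing $g^{ij}$ against $g_{ij}$ — conceptually the content is just the familiar fact that raising the $\theta$-gradient with the inverse Fisher metric reproduces the ordinary gradient taken in the dual ($\eta$) coordinates. I would also remark that the hypothesis $\gamma(s)\in U$ for all $s\in[0,\iter{t}{k}]$ is used only to guarantee that the curve stays inside the chart where $\theta$, $\psi$, and hence the computation above are defined.
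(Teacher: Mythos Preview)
Your proposal is correct and follows essentially the same approach as the paper: both arguments note that a $\nabla$-geodesic is an affine line in $\theta$-coordinates, reduce the claim to identifying $(\mathrm{d}\theta)_p(\operatorname{grad}f(p))$ with $D_\eta f(p)$, and establish this via the dually flat relation between $g$ and the coordinate change $\theta\leftrightarrow\eta$. The only cosmetic difference is that the paper invokes $g^{ij}=\partial\theta^i/\partial\eta_j$ directly so that $\sum_i g^{ij}\,\partial f/\partial\theta^i=\partial f/\partial\eta_j$ in one step, whereas you use the inverse relation $g_{ij}=\partial\eta_i/\partial\theta^j$ and then cancel $g^{ik}g_{i\ell}$; the content is identical.
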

\vspace{0.5cm}

This result shows that the computation of a geodesic reduces to computing
partial derivatives with respect to an affine coordinate system, once such a
coordinate system is available.
Since affine coordinate systems are often explicitly obtainable for commonly
used exponential families, \eqref{egeo2} is useful in practice.

\begin{proof}[Proof of Lemma~\ref{双対平坦空間における測地線}]
Since $\theta$ is an affine coordinate system with respect to $\nabla$, 
for the $\nabla$-geodesic $\gamma$ satisfying $\gamma(0)=p$ and 
$\dot{\gamma}(0)=-\operatorname{grad} f(p)$, we have
\[
\theta(\gamma(s)) 
= \theta(p) - s \, (\mathrm{d}\theta)_{p}\bigl(\operatorname{grad} f(p)\bigr).
\]
Moreover, since $M$ is a dually flat space, we have 
$g^{ij}(p) = \frac{\partial \theta^i}{\partial \eta_j}(p)$, and thus
\[
\operatorname{grad} f(p)
= \sum_{i=1}^n \sum_{j=1}^n g^{ij}(p)\frac{\partial f}{\partial \theta^i}(p)
\Bigl(\frac{\partial}{\partial \theta^j}\Bigr)_p
= \sum_{j=1}^n \frac{\partial f}{\partial \eta_j}(p)
\Bigl(\frac{\partial}{\partial \theta^j}\Bigr)_p.
\]
Therefore it follows that 
$(\mathrm{d}\theta)_{p}\bigl(\operatorname{grad} f(p)\bigr)
= D_{\eta} f(p)$, and hence \eqref{egeo2} holds.
\end{proof}

\vspace{0.5cm}

When considering the minimization of a real-valued function 
$f$ on a convex set $\Theta \subset \mathbb{R}^m$, the mirror descent method 
employs the Bregman divergence
$B_{\psi}(p,q)
= \psi(p) - \psi(q) - (D\psi(q))^\top (p - q)$
associated with a convex function $\psi$ on $\Theta$.
The corresponding update rule is defined by
\begin{equation}
\label{mirrorstep}
\thetait{k+1}
= \argmin_{\theta \in \Theta}
\Bigl\{ (D f(\thetait{k}))^\top \theta
+ \frac{1}{t_k} B_{\psi}(\theta, \thetait{k}) \Bigr\}.
\end{equation}
Raskutti and Mukherjee~\cite{raskutti2015} showed that this update rule is 
equivalent to the natural gradient method
\begin{equation*}
\label{naturalstep}
\etait{k+1}
= \etait{k}
- t_k\, g(\etait{k})^{-1} D \bar{f}(\etait{k}),
\end{equation*}
where the reparametrized function 
$\bar{f}$ is defined by $\bar{f}(\eta) \coloneqq f(\theta(\eta))$ through the coordinate 
transformation $\theta_j(\eta) \coloneqq \frac{\partial \varphi}{\partial \eta_j}(\eta)$ 
for $1 \le j \le n$.

Within the framework of Lemma~\ref{双対平坦空間における測地線}, the result of 
Raskutti and Mukherjee~\cite{raskutti2015} can be reformulated to interpret 
mirror descent as a Riemannian steepest descent method along the geodesics of 
the dual connection.
We regard $\Theta$ as a dually flat space endowed with 
potential function $\psi$.
For an arbitrary point $\iter{p}{k}$ in a 
dually flat space $(\Theta, g, \nabla, \nabla^{*})$, there exists a 
neighborhood $U$ of $\iter{p}{k}$ together with affine coordinate systems 
$\theta$ for $\nabla$ and $\eta$ for $\nabla^{*}$. Let 
$\iter{\theta}{k} \coloneqq \theta(\iter{p}{k})$, and assume that $f$ is 
smooth on $\Theta$. We further assume that $\theta(U)$ is a convex set. Then, 
using the Bregman divergence
$B_{\psi}(\theta, \tilde{\theta})
= \psi(\theta) - \psi(\tilde{\theta})
- (D\psi(\tilde{\theta}))^\top(\theta - \tilde{\theta})$
associated with the potential $\psi$ in the 
$\theta$-coordinates,
we consider the mirror descent update~\eqref{mirrorstep}.

\vspace{0.5cm}

\begin{proposition}\label{dual-mirror}
Let $\bar{f} \coloneqq f \circ \theta$ be the function obtained by regarding $f$ as a function on $U$.
Consider the geodesic $\gamma^{*}$ with respect to $\nabla^{*}$ satisfying
$\gamma^{*}(0)=\iter{p}{k}$ and $\dot{\gamma}^{*}(0)=-\operatorname{grad}\bar{f}(\iter{p}{k})$.
For sufficiently small $\iter{t}{k}$, put $\iter{p}{k+1}=\gamma^{*}(\iter{t}{k})$.
Then $\iter{\theta}{k+1} = \theta(\iter{p}{k+1})$,
where $\iter{\theta}{k+1}$ is defined by \eqref{mirrorstep}.
\end{proposition}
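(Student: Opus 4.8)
The plan is to verify that the point $\iter{p}{k+1}$ produced by the $\nabla^{*}$-geodesic step and the point whose $\theta$-coordinate is $\thetait{k+1}$ (as defined by the mirror-descent update \eqref{mirrorstep}) have the same $\eta$-coordinate, and then to conclude using the fact that on $U$ the correspondence $\theta \leftrightarrow \eta$ is a bijection, so that two points with equal $\eta$-coordinates coincide.

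First I would apply Lemma~\ref{双対平坦空間における測地線} with the roles of $\nabla$ and $\nabla^{*}$ (equivalently of $\theta$ and $\eta$, and of $\psi$ and its Legendre conjugate $\varphi$) interchanged; this is legitimate because a dually flat structure is symmetric under this exchange, and $\eta$ is an affine coordinate system for $\nabla^{*}$. The dual form of the lemma, applied to $\bar{f}$ and the $\nabla^{*}$-geodesic $\gamma^{*}$ with $\dot\gamma^{*}(0)=-\operatorname{grad}\bar f(\iter{p}{k})$, yields
\[
\eta(\iter{p}{k+1}) = \eta(\iter{p}{k}) - \iter{t}{k}\, D_{\theta}\bar{f}(\iter{p}{k}),
\qquad
D_{\theta}\bar{f}(\iter{p}{k}) = \Bigl( \tfrac{\partial \bar{f}}{\partial \theta^{1}}(\iter{p}{k}), \dots, \tfrac{\partial \bar{f}}{\partial \theta^{m}}(\iter{p}{k}) \Bigr),
\]
the hypothesis that $\iter{t}{k}$ is sufficiently small being exactly what guarantees $\gamma^{*}([0,\iter{t}{k}]) \subset U$. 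Since $\bar{f} = f \circ \theta$, expressing $\bar{f}$ through the coordinate functions $\theta^{i}$ returns $f$ itself, so $D_{\theta}\bar{f}(\iter{p}{k}) = Df(\thetait{k})$, the ordinary Euclidean gradient of $f$ on $\Theta$ that appears in \eqref{mirrorstep}.

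Next I would analyze the mirror-descent update. Because $B_{\psi}(\cdot,\thetait{k})$ is strictly convex with its unique minimum at the interior point $\thetait{k}$, for sufficiently small $\iter{t}{k}$ the penalty term $\frac{1}{t_k}B_{\psi}(\cdot,\thetait{k})$ dominates and forces the minimizer $\thetait{k+1}$ to lie in the interior of the convex set $\theta(U)$; hence $\thetait{k+1}$ is characterized by the first-order stationarity condition $Df(\thetait{k}) + \frac{1}{t_k}\bigl( D\psi(\thetait{k+1}) - D\psi(\thetait{k}) \bigr) = 0$, using $\nabla_{\theta} B_{\psi}(\theta,\thetait{k}) = D\psi(\theta) - D\psi(\thetait{k})$. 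Since $D\psi(\theta) = \eta(\theta)$ by definition of the dual coordinates, this rearranges to $\eta(\thetait{k+1}) = \eta(\thetait{k}) - \iter{t}{k}\, Df(\thetait{k})$.

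Comparing the two displays, and using $\eta(\iter{p}{k}) = \eta(\thetait{k})$, gives $\eta(\iter{p}{k+1}) = \eta(\thetait{k+1})$; since on $U$ the map $\theta \leftrightarrow \eta$ is a diffeomorphism, $\theta(\iter{p}{k+1}) = \thetait{k+1}$, which is the assertion. The only points requiring care are routine but not entirely trivial: (i) justifying that for small $\iter{t}{k}$ the mirror-descent minimizer is interior, so the stationarity equation is available — this follows from strict convexity of the Bregman divergence together with a continuity argument near $\thetait{k}$; and (ii) cleanly identifying $D_{\theta}\bar{f}(\iter{p}{k})$ with the Euclidean gradient $Df(\thetait{k})$, i.e.\ keeping straight the distinction between $f$ viewed on $\Theta \subset \mathbb{R}^{m}$ and $\bar{f}$ viewed on the manifold $U$. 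I do not expect a genuine obstacle; the content of the proposition is precisely that both update rules reduce to the same affine step in the $\eta$-coordinates.
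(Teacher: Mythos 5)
Your proposal is correct and follows essentially the same route as the paper: the paper's proof likewise reduces the mirror-descent update to its first-order stationarity condition $\eta_i(\thetait{k+1}) = \eta_i(\thetait{k}) - t_k\,\partial f/\partial\theta^i(\thetait{k})$ and matches it against the dual ($\eta$-coordinate) form of Lemma~\ref{双対平坦空間における測地線}. You merely make explicit two steps the paper leaves implicit — the interiority of the mirror-descent minimizer for small $\iter{t}{k}$ and the identification $D_\theta\bar f = Df$ — which is a reasonable tightening rather than a different argument.
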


\vspace{0.5cm}

Proposition \ref{dual-mirror} demonstrates that when the feasible set coincides with
$\theta(U)$ for some affine coordinate system $\theta$, and $\theta(U)$ is convex,
then for sufficiently small $t_k$, the mirror descent method is not only a natural
gradient method but also a descent method based on geodesics.

\begin{proof}[Proof of Proposition \ref{dual-mirror}]
The proof follows the argument in \cite{raskutti2015}.
Since $\eta$ is a $\nabla^{*}$-affine coordinate system, we have
\begin{align*}
\frac{\partial}{\partial \theta^i}
\biggl\{
\sum_j \theta^j \frac{\partial f}{\partial \theta^j}
(\thetait{k})
+ \frac{1}{t} B_\phi(\theta,\thetait{k})
\biggr\}
&=
\frac{\partial}{\partial \theta^i}
\biggl[
\sum_j \theta^j \frac{\partial f}{\partial \theta^j}
(\thetait{k})
+ \frac{1}{t}
\bigl\{ \psi(\theta) + \phi(\eta(\thetait{k}))
- \sum_j \theta^j \eta_j (\thetait{k}) \bigr\}
\biggr] \\
&=
\frac{\partial f}{\partial \theta^i}
(\thetait{k}) + \frac{1}{t}
\bigl\{ \eta_i - \eta_i(\thetait{k}) \bigr\},
\end{align*}
which completes the proof.
\end{proof}

\vspace{0.5cm}

In the natural gradient method, the direction in which the parameters are moved to
decrease the objective function is specified based on the Riemannian metric.
Although the descent trajectory of the natural gradient method may depend on the
choice of coordinate system, the mirror descent method has the structure that,
as we will see below, its descent trajectory coincides with a geodesic.

\vspace{0.5cm}

\begin{example}
\label{exponentiated}
The exponentiated gradient method can also be understood as a steepest descent method
with respect to the dual affine connection.
The exponentiated gradient update on the probability simplex
$\bar{\Delta}^n \coloneqq \{ y \in \mathbb{R}^n \mid y_i \ge 0,
\sum_{i=1}^n y_i = 1 \}$ for a smooth function
$f \colon \mathbb{R}^n \to \mathbb{R}$ is given by
\begin{equation}
\label{expo}
\iter{r}{k+1}_j
=
\frac{\iter{r}{k}_j
\exp\!\bigl(-t_k (D_{\mathbb{R}^n} f(\iter{r}{k}))_j\bigr)}
{\sum_{i=1}^n \iter{r}{k}_i
\exp\!\bigl(-t_k (D_{\mathbb{R}^n} f(\iter{r}{k}))_i\bigr)},
\qquad j = 1, \dots, n,
\end{equation}
as proposed in \cite{kivinen1997}.
Here $(D_{\mathbb{R}^n} f(\iter{r}{k}))_j$ denotes the usual partial derivative
with respect to the $j$th Euclidean coordinate.

It is known that the exponentiated gradient method can be formulated as
a mirror descent method (e.g., \cite[p.\,251]{beck2017}).
Let $\eta=(\eta_1,\dots,\eta_n)^\top \in
\Delta^{n} \coloneqq \{ \eta \in \mathbb{R}^n \mid
\eta_i > 0, \sum_{i=1}^n \eta_i = 1 \}$.
For $x \in \Omega \coloneqq \{1,\dots,n\}$, define $p(x \mid \eta) \coloneqq \eta_x$.
Let $S = \{ p(x \mid \eta) \mid \eta \in \Delta^n \}$.
Then $S$ is an exponential family as in \eqref{eq:likelihood3}.
Define $\bar{\eta} \colon S \to \Delta^n$ by
$\bar{\eta}(p(x \mid \eta)) = \eta$.

Let $\bar{f} \coloneqq f \circ \bar{\eta}$ be $f$ regarded as a function on $S$,
and for $\iter{r}{k} \in \Delta^n$, set
$\iter{p}{k} \coloneqq \bar{\eta}^{-1}(\iter{r}{k})$.
Consider the e-geodesic $\gamma^{(\mathrm{e})}$ on $S$
satisfying $\gamma^{(\mathrm{e})}(0)=\iter{p}{k}$ and
$\dot{\gamma}^{(\mathrm{e})}(0) = -\operatorname{grad}\bar{f}(\iter{p}{k})$.
Then for $\iter{p}{k+1} = \gamma^{(\mathrm{e})}(\iter{t}{k})$,
we obtain $\iter{r}{k+1} = \bar{\eta}(\iter{p}{k+1})$.

Since $\theta(S)=\mathbb{R}^{n-1}$, the manifold $S$ is geodesically complete
with respect to the e-connection, so no restriction is imposed on the step size $t_k$.
On the other hand, $S$ is not geodesically complete with respect to the Levi–Civita
connection or the m-connection.
Thus, the exponentiated gradient method can be seen as a steepest descent method
along geodesics that effectively exploit the geometric structure of $S$.
\end{example}

\vspace{0.5cm}

These two results not only provide a geometric interpretation of existing methods
but also indicate the usefulness of the dual-connection approach.

\section{Optimization of Objective Functions with Geometric Structure}
Optimization methods that utilize the structure of Riemannian manifolds are often effective
because there exists a certain relationship between the objective function and the underlying geometric structure.
For example, when the Fisher information matrix is used as the Riemannian metric,
the natural gradient method becomes essentially equivalent to the Fisher scoring algorithm introduced by Fisher \cite{fisher1925};
see, e.g., \cite[p.~245]{lange2013} for an exposition.
The Fisher scoring method has long been used in statistics, particularly for estimating maximum likelihood parameters.
In this case, its effectiveness is closely related to the intimate connection between the log-likelihood function and the Fisher metric.

In this section, we investigate the relationship between the minimization of objective functions based on divergences or log-likelihoods
and descent methods along geodesics induced by dual connections.

We consider the case where the objective function is the divergence 
on a dually flat space. As we will see below, this setting essentially 
includes the maximization of the log-likelihood function for an exponential family.

Fix a reference point $q \in U$. Let $D$ denote the canonical divergence 
on $U$ and define a function $f \colon U \to \mathbb{R}$ by
\begin{equation}
 f(r) \coloneqq B_\psi(r, q)
 = \psi(\theta(r)) + \varphi(\eta(q))
 - \sum_{j=1}^m \theta^j(r) \eta_j(q).
\end{equation}
The function $f(r)$ attains its minimum value $0$ when $r = q$.
In our optimization procedure, however, we do not make use of either
the fact that the minimizer is $q$ or that the minimum value is $0$.
\vspace{0.5cm}
\begin{theorem} \label{dualdiv}
For any $p, q \in U$,
\begin{equation}
\label{kousin}
\eta(q) = \eta(p) - (\mathrm{d}\eta)_p \, \operatorname{grad} f(p),
\end{equation}
where $f(r) \coloneqq B_\psi(r, q)$.
\end{theorem}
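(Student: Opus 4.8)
The plan is to mirror the computation in the proof of Lemma~\ref{双対平坦空間における測地線}, but with the roles of the two dual affine coordinate systems interchanged, so that $\operatorname{grad} f(p)$ is read off in the $\eta$-coordinates rather than in the $\theta$-coordinates. Then \eqref{kousin} becomes exactly the statement that moving from $p$ with step size $1$ in the direction $-\operatorname{grad} f(p)$, along the curve that is affine in the $\eta$-coordinates, lands precisely on $q$ — i.e.\ the dual (m-geodesic) counterpart of Lemma~\ref{双対平坦空間における測地線} with the canonical divergence as objective.

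First I would compute the $\theta$-coordinate partial derivatives of $f$. Since $f(r) = \psi(\theta(r)) + \varphi(\eta(q)) - \sum_{j=1}^m \theta^j(r)\,\eta_j(q)$ and $\eta_i = \partial\psi/\partial\theta^i$, differentiating in $\theta$ gives at once
\[
\frac{\partial f}{\partial \theta^i}(p) = \eta_i(p) - \eta_i(q)\qquad(i=1,\dots,m),
\]
so $D_\theta f(p) \coloneqq \bigl(\tfrac{\partial f}{\partial\theta^1}(p),\dots,\tfrac{\partial f}{\partial\theta^m}(p)\bigr)^\top = \eta(p) - \eta(q)$. Next I would record the dual of the gradient identity from Lemma~\ref{双対平坦空間における測地線}: since the metric components in the $\eta$-coordinates form the matrix $(g^{ij}) = (g_{ij})^{-1}$, one has
\[
\operatorname{grad} f(p) = \sum_{i=1}^m\sum_{j=1}^m g_{ij}(p)\,\frac{\partial f}{\partial \eta_i}(p)\,\Bigl(\frac{\partial}{\partial \eta_j}\Bigr)_p ,
\]
and because $g_{ij} = \partial^2\psi/\partial\theta^i\partial\theta^j = \partial\eta_i/\partial\theta^j$, the chain rule $\sum_i(\partial\eta_i/\partial\theta^j)(\partial f/\partial\eta_i) = \partial f/\partial\theta^j$ collapses this double sum to $\operatorname{grad} f(p) = \sum_{j=1}^m \tfrac{\partial f}{\partial\theta^j}(p)\,(\partial/\partial\eta_j)_p$.

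Finally I would apply $(\dd\eta)_p$: by the definition of the differential map in the $\eta$-coordinates, $(\dd\eta)_p\operatorname{grad} f(p) = D_\theta f(p)$, which by the first step equals $\eta(p) - \eta(q)$; rearranging yields \eqref{kousin}. The only point needing care is the index bookkeeping — upper versus lower indices, and which of $(g_{ij})$, $(g^{ij})$ is the metric versus its inverse in each of the two coordinate systems; once the identity $g_{ij} = \partial\eta_i/\partial\theta^j$ is in hand the rest is routine, so I do not anticipate a genuine obstacle beyond that.
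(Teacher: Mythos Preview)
Your proposal is correct and follows essentially the same route as the paper: both compute $\partial f/\partial\theta^j(p)=\eta_j(p)-\eta_j(q)$, express $\operatorname{grad} f(p)$ in the $\eta$-basis as $\sum_j(\partial f/\partial\theta^j)(p)\,(\partial/\partial\eta_j)_p$ via the metric/Jacobian identity on a dually flat space, and then read off $(\dd\eta)_p\operatorname{grad} f(p)=\eta(p)-\eta(q)$. Your index bookkeeping (using $g_{ij}=\partial\eta_i/\partial\theta^j$ as the inverse of the metric in $\eta$-coordinates) is arguably cleaner than the paper's, but the argument is the same.
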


\vspace{0.5cm}

\begin{proof}
Since $g^{ij}(p) = \frac{\partial \theta^i}{\partial \eta_j}(p)$ and
$\frac{\partial f}{\partial \theta_j}(p)
= \frac{\partial \psi}{\partial \theta_j}(p) - \eta_j(q)$, we obtain
\begin{align}
\operatorname{grad} f(p)
&= \sum_{i=1}^m \sum_{j=1}^m g^{ij}(p)
   \frac{\partial f}{\partial \eta_i}(p)
   \left(\frac{\partial}{\partial \eta_j}\right)_p \notag \\
&= \sum_{j=1}^m \frac{\partial f}{\partial \theta^j}(p)
   \left(\frac{\partial}{\partial \eta_j}\right)_p \notag \\
&= \sum_{j=1}^m
   \left( \frac{\partial \psi}{\partial \theta^j}(p)
   - \eta_j(q) \right)
   \left(\frac{\partial}{\partial \eta_j}\right)_p \notag \\
&= \sum_{j=1}^m \left( \eta_j(p) - \eta_j(q) \right)
   \left(\frac{\partial}{\partial \eta_j}\right)_p.
\label{eq_grad}
\end{align}
Thus, $(\mathrm{d}\eta)_p \operatorname{grad} f(p)
= \eta(p) - \eta(q)$,
and this proves \eqref{kousin}.
\end{proof}

\vspace{0.5cm}

From Theorem~\ref{dualdiv}, we obtain the following corollary.
Let $(S, g, \nabla^{(\ee)}, \nabla^{(\mm)})$ be 
the statistical manifold induced by an exponential family.
Fix an arbitrary point $q \in S$. Then there exist affine coordinate
systems $\theta$ for $\nabla^{(\ee)}$ and $\eta$ for $\nabla^{(\mm)}$,
which are mutually dual with respect to $g$.
Using the Kullback--Leibler divergence, we define
\begin{equation}
 f(\eta) \coloneqq \mathrm{KL}(q, r(\eta)), ~~~
 h(\theta) \coloneqq \mathrm{KL}(r(\theta), q).
\end{equation}

\begin{corollary}
\label{kldiv}
For any $p, q \in S$, the following relations hold:
\begin{align*}
 \eta(q) &= \eta(p) - (\mathrm{d}\eta)_p \, \operatorname{grad} f(p), \\
 \theta(q) &= \theta(p) - (\mathrm{d}\theta)_p \, \operatorname{grad} h(p),
\end{align*}
where $f(\eta) \coloneqq \mathrm{KL}(q, r(\eta))$ and
$h(\theta) \coloneqq \mathrm{KL}(r(\theta), q)$.
\end{corollary}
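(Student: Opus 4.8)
The plan is to derive the corollary as two direct applications of Theorem~\ref{dualdiv}, exploiting the fact that the Kullback--Leibler divergence on an exponential family coincides with the Bregman divergence of the potential $\psi$, and that the roles of $\theta$ and $\eta$ can be swapped by passing to the dual potential $\varphi$. The key structural observation is that a dually flat space $(S,g,\nabla^{(\ee)},\nabla^{(\mm)})$ can be viewed in two symmetric ways: once with $\nabla^{(\ee)}$ as the ``primal'' connection (affine coordinate $\theta$, potential $\psi$), and once with $\nabla^{(\mm)}$ as the ``primal'' connection (affine coordinate $\eta$, potential $\varphi$). Theorem~\ref{dualdiv} applies verbatim to either viewpoint; one just has to keep track of which argument of the divergence is held fixed.

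First I would handle the first relation. Since $\mathrm{KL}(q, r) = B_\psi(q, r)$ does \emph{not} immediately have the shape $B_\psi(r, \cdot)$ required by Theorem~\ref{dualdiv}, I would instead use the duality identity $B_\psi(q, r) = B_\varphi(\eta(r), \eta(q))$, i.e.\ the Bregman divergence with arguments and potential exchanged. Then $f(\eta) = \mathrm{KL}(q, r(\eta)) = B_\varphi(\eta(\cdot), \eta(q))$ is exactly a divergence of the form ``$B_\varphi(\text{variable point}, \text{fixed point } q)$'' on $S$ regarded as a dually flat space whose primal affine coordinate is $\eta$ and whose dual affine coordinate is $\theta$. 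Applying Theorem~\ref{dualdiv} in this dual picture — with $(\theta,\eta,\psi)$ in the theorem replaced by $(\eta,\theta,\varphi)$ — yields $\theta(q) = \theta(p) - (\dd\theta)_p\operatorname{grad} f(p)$. But that is the \emph{second} displayed equation, not the first, so I must be careful about the bookkeeping: I would instead run the argument so that the variable being solved for is $\eta(q)$. Concretely, for the first relation take the divergence $\mathrm{KL}(q,r)=B_\psi(q,r)$ and note $\partial f/\partial\eta_i = \partial/\partial\eta_i\,[\varphi(\eta) - \varphi(\eta(q)) - (D\varphi(\eta(q)))^\top(\eta - \eta(q))] = \theta^i - \theta^i(q)$ evaluated suitably, mirroring the computation \eqref{eq_grad} with the roles of $\theta$ and $\eta$ interchanged; this gives $\operatorname{grad} f(p) = \sum_j(\theta^j(p) - \theta^j(q))(\partial/\partial\theta^j)_p$ and hence $(\dd\eta)_p\operatorname{grad} f(p) = \eta(p) - \eta(q)$ after contracting with $g_{ij} = \partial\eta_i/\partial\theta^j$. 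For the second relation, $h(\theta) = \mathrm{KL}(r(\theta),q) = B_\psi(r(\theta),q)$ is already literally the function $f$ of Theorem~\ref{dualdiv}, so the conclusion $\theta(q) = \theta(p) - (\dd\theta)_p\operatorname{grad} h(p)$ is immediate — one simply reads off equation \eqref{kousin} after transposing it via the dual coordinate, or equivalently applies the same gradient computation \eqref{eq_grad} with $\eta$ replaced by $\theta$ throughout.

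The main obstacle I anticipate is purely notational rather than mathematical: making sure that in each of the two cases the correct coordinate appears on the left-hand side and that the metric raising/lowering ($g^{ij} = \partial\theta^i/\partial\eta_j$ versus $g_{ij} = \partial^2\psi/\partial\theta^i\partial\theta^j = \partial\eta_i/\partial\theta^j$) is used consistently. In particular, for the first relation $f = \mathrm{KL}(q,\cdot)$ one is genuinely in the ``$\varphi$-picture'' and it is easy to write $(\dd\theta)_p$ where $(\dd\eta)_p$ is meant. I would therefore present the proof as: (i) recall $\mathrm{KL}(q,r) = B_\psi(q,r) = \varphi(\eta(r)) + \psi(\theta(q)) - \sum_j\theta^j(q)\eta_j(r)$, so that $\partial f/\partial\eta_j(p) = \eta_j$-derivative $= \theta^j(p) - \theta^j(q)$; (ii) compute $\operatorname{grad} f(p) = \sum_{i,j}g_{ij}(p)\,\frac{\partial f}{\partial\theta^i}\,(\partial/\partial\eta_j)_p$ — wait, here one uses the chain rule $\partial f/\partial\theta^i = \sum_k g_{ki}\,\partial f/\partial\eta_k$ and $g^{ij}$ to recover $\operatorname{grad} f(p) = \sum_j(\theta^j(p)-\theta^j(q))(\partial/\partial\theta^j)_p$; (iii) contract with $(\dd\eta)_p$, using $(\dd\eta)_p(\partial/\partial\theta^j)_p = (g_{1j},\dots,g_{mj})^\top$, to get $\eta(p)-\eta(q)$; (iv) the second relation is the mirror statement, obtained by exchanging $\theta\leftrightarrow\eta$ and $\psi\leftrightarrow\varphi$ in Theorem~\ref{dualdiv}, which applies because $h(\theta) = B_\psi(r(\theta),q)$ is exactly the hypothesis of that theorem with $p$ ranging over $S$. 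No step requires any estimate or convergence argument, so once the index conventions are fixed the proof is a few lines.
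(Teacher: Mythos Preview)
Your high-level strategy --- invoke Theorem~\ref{dualdiv} twice, once in the primal picture $(\theta,\psi)$ and once in the dual picture $(\eta,\varphi)$ --- is precisely the paper's argument. The execution, however, contains a genuine error.

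The source of trouble is the argument ordering in the identification of $\mathrm{KL}$ with $B_\psi$. The relation consistent with the computation in Theorem~\ref{MLE} (equation~\eqref{lldiv}) is $\mathrm{KL}(q,r) = B_\psi(r,q)$ in the paper's Bregman convention --- the arguments swap. With this in hand, $f(r)=\mathrm{KL}(q,r)=B_\psi(r,q)$ is \emph{literally} the function in Theorem~\ref{dualdiv}, so \eqref{kousin} gives the first relation immediately; dually, $h(r)=\mathrm{KL}(r,q)=B_\psi(q,r)=B_\varphi(r,q)$, and Theorem~\ref{dualdiv} with $(\theta,\eta,\psi)$ replaced by $(\eta,\theta,\varphi)$ gives the second. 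That is the whole proof.

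By assuming instead $\mathrm{KL}(q,r)=B_\psi(q,r)$ you reversed which of the two displayed relations belongs to which function, and your attempted direct rescue breaks down. In step~(iii) you correctly compute $\operatorname{grad} f(p)=\sum_j(\theta^j(p)-\theta^j(q))(\partial/\partial\theta^j)_p$ and $(\dd\eta)_p(\partial/\partial\theta^j)_p=(g_{1j},\dots,g_{mj})^\top$, but then
\[
\bigl((\dd\eta)_p\operatorname{grad} f(p)\bigr)_i \;=\; \sum_j g_{ij}(p)\bigl(\theta^j(p)-\theta^j(q)\bigr),
\]
which equals $\eta_i(p)-\eta_i(q)$ only to first order in $\theta(p)-\theta(q)$ (it is the linearization of $\eta$ at $p$), not exactly. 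Your treatment of $h$ is also internally inconsistent: you identify $h=B_\psi(\cdot,q)$ as ``literally the function of Theorem~\ref{dualdiv}'' and then claim its conclusion is the $\theta$-equation, whereas \eqref{kousin} is an $\eta$-equation. Once the correct swap $\mathrm{KL}(a,b)=B_\psi(b,a)$ is used, both relations fall out of Theorem~\ref{dualdiv} with no further calculation, exactly as the paper indicates.
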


\vspace{0.5cm}

\begin{proof}
The result follows from Theorem~\ref{dualdiv}, 
the existence of affine coordinate systems associated with 
$\nabla^{(\ee)}$ and $\nabla^{(\mm)}$ for an exponential family,
and the fact that, for an exponential family, 
the canonical divergence coincides with the 
Kullback--Leibler divergence.
\end{proof}

Combining this result with the update rules established in Section~2, 
we obtain the following interpretation.
If the $\mm$-geodesic $\gamma^{(\mm)}$ satisfying 
$\gamma^{(\mm)}(0) = p$ and 
$\dot{\gamma}^{(\mm)}(0) = - \operatorname{grad} f(p)$ 
is defined on $[0,1]$, then $\gamma^{(\mm)}(1)$ 
is the minimizer of $f$.
Similarly, if the $\ee$-geodesic $\gamma^{(\ee)}$ satisfying 
$\gamma^{(\ee)}(0) = p$ and 
$\dot{\gamma}^{(\ee)}(0) = - \operatorname{grad} h(p)$ 
is defined on $[0,1]$, then $\gamma^{(\ee)}(1)$ 
is the minimizer of $h$.

The following result holds for maximum likelihood estimation 
in an exponential family.

\begin{theorem}\label{MLE}
Let $S$ be the statistical manifold
associated with an exponential family 
with densities $p(x \mid \theta)$ for $\theta \in \theta(S)$.
Suppose that $N$ independent samples 
$x^N = (x(1), x(2), \dots, x(N))$ are drawn from a distribution 
in $S$, and assume that a maximum likelihood estimator 
$\hat{\theta} \in \theta(S)$ exists such that
$\sup_{\theta} p(x^N \mid \theta)
= p(x^N \mid \hat{\theta})$.
Define the negative log-likelihood function 
$f \colon S \to \mathbb{R}$ by
\begin{equation}
\label{nll}
f(p) \coloneqq - \sum_{l=1}^N \log p(x(l) \mid \theta(p)),
\end{equation}
where $p \in S$.
Let $\hat{p} \coloneqq p(\hat{\theta})$.
Then, for any $p \in S$,
\begin{equation}
\eta(\hat{p})
=
\eta(p)
- \frac{1}{N} (\mathrm{d}\eta)_p \,
\operatorname{grad} f(p).
\end{equation}
\end{theorem}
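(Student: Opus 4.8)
The plan is to reduce Theorem~\ref{MLE} to Theorem~\ref{dualdiv} by showing that the negative log-likelihood $f$, up to an affine function of $\theta$ and a factor of $N$, coincides with the Bregman divergence $B_\psi(\,\cdot\,, \hat p)$ from the running point to the maximum likelihood estimate $\hat p$. Concretely, first I would write the exponential family density in canonical form, so that
\begin{equation*}
-\sum_{l=1}^N \log p(x(l)\mid\theta)
= N\psi(\theta) - \sum_{i=1}^m \Bigl(\sum_{l=1}^N T_i(x(l))\Bigr)\theta^i + (\text{const}),
\end{equation*}
where the constant collects the base-measure terms and does not depend on $\theta$. Dividing by $N$ and using that the maximum likelihood estimator satisfies $\eta_i(\hat p) = \frac1N\sum_{l=1}^N T_i(x(l))$ (the standard moment-matching identity for exponential families, which holds because $\hat\theta$ is interior and $\nabla_\theta\psi = \eta$), this becomes $\psi(\theta) - \sum_i \theta^i\eta_i(\hat p)$ plus a $\theta$-independent term.

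The key observation is then that $B_\psi(p,\hat p) = \psi(\theta(p)) + \varphi(\eta(\hat p)) - \sum_j\theta^j(p)\eta_j(\hat p)$ differs from $\frac1N f(p)$ only by the additive constant $\varphi(\eta(\hat p)) + \frac1N(\text{const})$, which is independent of $p$. Since the gradient of a smooth function is unchanged by adding a constant, $\operatorname{grad}\bigl(\tfrac1N f\bigr)(p) = \operatorname{grad} B_\psi(\,\cdot\,,\hat p)(p)$ for every $p\in S$; equivalently $\operatorname{grad} f(p) = N\operatorname{grad} B_\psi(\,\cdot\,,\hat p)(p)$. Applying Theorem~\ref{dualdiv} with $q = \hat p$ gives $\eta(\hat p) = \eta(p) - (\dd\eta)_p\operatorname{grad} B_\psi(\,\cdot\,,\hat p)(p) = \eta(p) - \frac1N(\dd\eta)_p\operatorname{grad} f(p)$, which is exactly the claimed identity.

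I expect the only genuine subtlety to be the justification that $\hat p$ lies in $S$ with $\eta(\hat p)$ equal to the empirical mean of the sufficient statistic — i.e.\ that the hypothesized interior maximum likelihood estimator $\hat\theta\in\theta(S)$ does satisfy the moment-matching equations. This follows from setting the $\theta$-gradient of the (smooth, strictly convex in $\theta$) negative log-likelihood to zero at the interior point $\hat\theta$, together with $\partial\psi/\partial\theta^i = \eta_i$, but it is worth stating explicitly since the theorem's hypothesis is phrased in terms of $\hat\theta$ rather than in terms of the sufficient statistic. Everything else is the bookkeeping of identifying $f/N$ with a Bregman divergence up to a constant and invoking the already-proved Theorem~\ref{dualdiv}; no geodesic or connection computation is needed beyond what is packaged there.
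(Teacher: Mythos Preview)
Your proposal is correct and follows essentially the same route as the paper: you rewrite $f(r)/N$ as $B_\psi(r,\hat p)$ plus a constant via the moment-matching identity $\eta_i(\hat p)=\tfrac{1}{N}\sum_l T_i(x(l))$, then invoke Theorem~\ref{dualdiv}. The paper does the identical computation, phrasing it as $f(r)=N\{\mathrm{KL}(\hat p,r)-\varphi(\eta(\hat p))\}$ and citing Corollary~\ref{kldiv} (which is just Theorem~\ref{dualdiv} restated for exponential families); your explicit remark on why $\hat\theta$ being an interior maximizer forces the moment equations is a point the paper leaves implicit.
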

\vspace{0.5cm}
\begin{proof}
Let $\hat{p} \in S$ denote the point corresponding to the 
maximum likelihood estimator obtained from $x^N$.
From \eqref{nll} and the identity 
$\eta_k(\hat{p}) = \sum_{l=1}^N T_k(x(l))/N$
for $k = 1, \dots, m$, we have
\begin{align}
f(r)
&=
\sum_{l=1}^N \Bigl\{ \psi(\theta(r)) 
- \sum_{k=1}^m \theta^k(r) T_k(x(l)) \Bigr\} \notag \\
&=
N \Bigl[
\psi(\theta(r)) 
- \sum_{k=1}^m \theta^k(r)\, \eta_k(\hat{p})
+ \varphi(\eta(\hat{p}))
- \varphi(\eta(\hat{p}))
\Bigr] \notag \\
&=
N \Bigl\{
\mathrm{KL}(\hat{p}, r)
- \varphi(\eta(\hat{p}))
\Bigr\}.
\label{lldiv}
\end{align}
The claim follows from Corollary~\ref{kldiv}.
\end{proof}

From Theorem~\ref{MLE} together with 
Lemma~\ref{双対平坦空間における測地線},
it follows that, for any $p \in S$,
the $\mm$-geodesic $\gamma^{(\mm)}$ satisfying
$\gamma^{(\mm)}(0)=p$ and
$\dot{\gamma}^{(\mm)}(0)
= -\operatorname{grad} f(p)$
yields $\gamma^{(\mm)}(1/N)$ as a stationary point of $f$.
In other words, the steepest descent method along
$\mm$-geodesics with step size $1/N$
reaches the maximum likelihood estimator in a single step.

\section{Numerical Experiments}

We apply the optimization method developed in the previous section 
to several examples.

\begin{example}
We present numerical experiments consistent with 
Corollary~\ref{kldiv}. We consider the minimization of the 
Kullback--Leibler divergence between categorical distributions 
using update steps based on $\ee$- and $\mm$-geodesics.

We set the dimension $n = 3$ and initialize the step size as 
$t = 1.0$.
The step size $t$ is successively halved 
until the stopping condition is satisfied.
The initial point is chosen as 
$\eta_0 = (1/3,\, 1/3) \in S_{2}$.

We generate the target distribution $q \in S$ by sampling
$q'_1, q'_2, q'_3$ independently from the uniform distribution 
on $[0,1]$ and defining
\[
q_1 = \frac{q'_1}{q'_1 + q'_2 + q'_3}, \qquad
q_2 = \frac{q'_2}{q'_1 + q'_2 + q'_3}.
\]
We then set $q = \eta^{-1}(q_1, q_2)$.

We define the objective functions
\begin{align}
    f(r) &= \mathrm{KL}(q, r), \\
    h(r) &= \mathrm{KL}(r, q).
\end{align}

We apply update steps based on $\ee$-, $\mm$-, and Levi-Civita
geodesics to minimize $f$ and $h$ over $S$.
Since the minimum of both $f$ and $h$ is attained at $q$, 
we terminate the iteration when 
$\|\eta(p^{(k)}) - \eta(q)\|_{2} < \varepsilon$
with $\varepsilon = 10^{-5}$.

The convergence behavior may depend on the values of 
$q'_1, q'_2, q'_3$. Therefore, we repeat the experiment 
100 times with independently generated $q$ and compute the mean 
and standard deviation of the number of iterations required for convergence.
The results rounded to three decimal places
are shown in Tables~\ref{tab:forwardkl} and \ref{tab:backkl}.

\begin{table}[H]
\centering
\doublespacing
\caption{Mean and standard deviation of iterations required 
for minimizing $f$ for a categorical distribution ($n=3$)}
\label{tab:forwardkl}
\begin{tabular}{|c|c|c|}
\hline
Update rule & Mean iterations & Standard deviation \\
\hline
$\mm$-connection &  1.00 & 0.00 \\
$\ee$-connection & 3.66 & 1.06 \\
\hline
\end{tabular}
\end{table}

\begin{table}[H]
\centering
\doublespacing
\caption{Mean and standard deviation of iterations required 
for minimizing $h$ for a categorical distribution ($n=3$)}
\label{tab:backkl}
\begin{tabular}{|c|c|c|}
\hline
Update rule & Mean iterations & Standard deviation \\
\hline
$\mm$-connection &  3.84 & 1.09 \\
$\ee$-connection & 1.00 & 0.00 \\
\hline
\end{tabular}
\end{table}
For the minimization of $f$, the $\ee$-geodesic update 
converged in exactly one iteration for all 100 trials, 
while for the minimization of $h$, the $\mm$-geodesic update 
converged in exactly one iteration for all 100 trials.
These results are consistent with Corollary~\ref{kldiv}.

Furthermore, for a categorical distribution, 
maximum likelihood estimation is equivalent to the minimization 
of $f$ by \eqref{lldiv}. Therefore, as shown in 
Table~\ref{tab:forwardkl}, using the update along 
$\mm$-geodesics enables convergence to the maximum likelihood 
estimator in a single iteration.
\end{example}

\vspace{0.5cm}

\begin{example}
We conducted similar numerical experiments on maximum likelihood estimation for mixture families.
Using the given probability distributions $p_1(x), \dots, p_n(x)$ we define
\[
p(x \mid \eta) = \sum_{k=1}^{n-1} \eta_k p_k(x)
    + \left( 1 - \sum_{k=1}^{n-1} \eta_k \right) p_n(x)
\]
The mixture family $\mathcal{S}$ is then defined as
\[
S = \left\{ p(x \mid \eta) \ \middle| \ \eta \in S_{n-1} \right\},
\]
where $S_{n-1}$ is defined by \eqref{multi}.
It is known that the mixture family is a dually flat space.

We set $n = 4$ and $\Omega = \{0, \dots, 7\}$.
The distributions $p_1(x), p_2(x), p_3(x), p_4(x)$ are uniform on
$A_1 = \{0, 1, 2\}$, $A_2 = \{2, 3, 4\}$,
$A_3 = \{4, 5, 6\}$, and $A_4 = \{6, 7, 0\}$, respectively.
The parameter $\eta$ is the m-affine coordinate system.
The e-affine coordinate system dual to the m-affine coordinate system is given by
\begin{align}
\label{thetaeta}
\theta^i = \sum_x
\bigl( p_i(x) - p_n(x) \bigr) \log p(x \mid \eta).
\end{align}
It can be shown that the parameter space of the affine coordinate system
$\theta$ with respect to the e-connection is
$\theta(S) = \mathbb{R}^3$.

In this setting, the transformation from the m-affine coordinate $\eta$ 
to the e-affine coordinate $\theta$
was computed by \eqref{thetaeta},
and the inverse transformation from $\theta$ to $\eta$ was numerically computed
using the Newton method.

We compared three methods
the m-geodesic method, the e-geodesic method, and the exponentiated gradient method.
The stopping criterion was that the Euclidean norm of the gradient of the
negative log-likelihood function with respect to $(\eta_1, \eta_2, \eta_3)$
becomes less than $10^{-5}$ and we compared the number of iterations.
The initial value was set to
$(\eta_1, \eta_2, \eta_3) = (\frac{1}{4}, \frac{1}{4}, \frac{1}{4})$.

The sample size was $N = 1000$ and we considered three cases
(1) 250 samples were randomly drawn from each of $A_1, A_2, A_3, A_4$
(2) 400 samples were randomly drawn from each of $A_1$ and $A_2$
and 100 samples were randomly drawn from each of $A_3$ and $A_4$
(3) 700 samples were randomly drawn from $A_1$
and 100 samples were randomly drawn from each of $A_2, A_3$, and $A_4$.

For the step size $lr$, in the m-geodesic method the update rule halved the value
until the updated parameter entered the parameter space $S_3$ of $\eta$.
Taking a large $lr$ causes the Newton method used in the e-geodesic computation
to fail to converge therefore we examined three cases
$lr = 0.5/N, 1.0/N, 1.5/N$.

The results are summarized in Tables \ref{mix_250}, \ref{mix_400}, and \ref{mix_700}.

\begin{table}[H]
    \centering
    \doublespacing
    \caption{Number of iterations for the data configuration (250, 250, 250, 250)}
    \label{mix_250}
    \begin{tabular}{|c|c|c|c|}
        \hline
        Method $\;\setminus\;$ lr & $\frac{0.5}{N}$ & $\frac{1.0}{N}$ & $\frac{1.5}{N}$ \\
        \hline
        Exponentiated Gradient & 89 & 40 & 24 \\
        m-Geodesic & 25 & 5 & 22 \\
        e-Geodesic & 27 & 5 & 23 \\
        \hline
    \end{tabular}
\end{table}

\begin{table}[H]
    \centering
    \doublespacing
    \caption{Number of iterations for the data configuration (400, 400, 100, 100)}
    \label{mix_400}
    \begin{tabular}{|c|c|c|c|}
        \hline
        Method  $\;\setminus\;$ lr & $\frac{0.5}{N}$ & $\frac{1.0}{N}$ & $\frac{1.5}{N}$ \\
        \hline
        Exponentiated Gradient & 82 & 37 & 21 \\
        m-Geodesic & 27 & 9 & 49 \\
        e-Geodesic & 28 & 9 & 47 \\
        \hline
    \end{tabular}
\end{table}

\begin{table}[H]
    \centering
    \doublespacing
    \caption{Number of iterations for the data configuration (700, 100, 100, 100)}
    \label{mix_700}
    \begin{tabular}{|c|c|c|c|}
        \hline
        Method $\;\setminus\;$ lr & $\frac{0.5}{N}$ & $\frac{1.0}{N}$ & $\frac{1.5}{N}$ \\
        \hline
        Exponentiated Gradient & 92 & 41 & 23 \\
        m-Geodesic & 29 & 8 & 41 \\
        e-Geodesic & 29 & 8 & 39 \\
        \hline
    \end{tabular}
\end{table}

Unlike the case of exponential families, we can observe that
the m-geodesic method with $lr = \frac{1}{N}$ does not converge in a single iteration,
and the number of iterations required for convergence is similar to that of the e-geodesic method.
In all cases regardless of the data configuration,
$lr = \frac{1}{N}$ achieves the fastest convergence for both the m- and e-geodesic methods.
Furthermore, the results of the Exponentiated Gradient method with a larger $lr$
are summarized in Table~\ref{expo_large}.

\begin{table}[H]
    \centering
    \doublespacing
    \caption{Number of iterations of the Exponentiated Gradient method for each data configuration}
    \label{expo_large}
    \begin{tabular}{|c|c|c|c|c|c|c|}
        \hline
        Data configuration $\;\setminus\;$ lr & $\frac{1.6}{N}$ & $\frac{1.7}{N}$ & $\frac{1.8}{N}$ & $\frac{1.9}{N}$ & $\frac{2.0}{N}$ & $\frac{2.1}{N}$ \\
        \hline
        $(250, 250, 250, 250)$ & 22 & 20 & 18 & 17 & 17 & 20 \\
        $(400, 400, 100, 100)$ & 19 & 17 & 16 & 16 & 19 & 23 \\
        $(700, 100, 100, 100)$ & 21 & 19 & 24 & 31 & 40 & 57 \\
        \hline
    \end{tabular}
\end{table}

Based on these results and Example~\ref{exponentiated},
given that the Exponentiated Gradient method corresponds to
the $e$-geodesic on the categorical distribution over the probability simplex,
we can observe that the $e$- and $m$-geodesic methods,
which correctly exploit the structure of the probability distributions
of mixture families,
require fewer iterations than the Exponentiated Gradient method.
\end{example}

\vspace{0.5cm}

\begin{example}
We consider the Bradley--Terry model \cite{BA1952},
which is a practically important application of maximum likelihood estimation
in exponential families.
The problem setting of the Bradley--Terry model is briefly explained as follows.
Let there be $N$ players and suppose that each player $i$ has a strength
parameter $\pi_i > 0$ satisfying $\sum_{i=1}^N \pi_i = 1$.
The probability that player $i$ defeats player $j$ is defined as
$p_{ij} = \pi_i / (\pi_i + \pi_j)$.

Let $n_{ij}$ $(i,j = 1,\ldots, N)$ denote the number of matches between players $i$ and $j$,
and let $x_{ij} \in \{0,1,\ldots,n_{ij}\}$ denote the number of times
that $i$ wins against $j$.
In particular, $x_{ij} + x_{ji} = n_{ij}$ holds for all $i \neq j$.
We denote the collection of all observed match results by
$x = (x_{ij})_{1 \le i < j \le N}$.
Then the probability of observing $x$ is given by
\[
\prod_{i=1}^{N-1} \prod_{j=i+1}^{N}
\binom{n_{ij}}{x_{ij}}
p_{ij}^{x_{ij}}
p_{ji}^{n_{ij} - x_{ij}}.
\]

We define the sufficient statistics and natural parameters by
\begin{align}
T_i(x) &\coloneqq \sum_{j \neq i} x_{ij}, \\
\theta^i &\coloneqq \log \frac{\pi_i}{\pi_N}
\quad (i = 1, \ldots, N), \quad
\theta = (\theta^1, \ldots, \theta^{N-1}), \\
\psi(\theta) &\coloneqq \sum_{1 \le i < j \le N}
n_{ij} \log \left( \exp(\theta^i) + \exp(\theta^j) \right).
\end{align}
Then the probability density with respect to the base measure
$\prod_{i=1}^{N-1} \prod_{j=i+1}^{N}
\binom{n_{ij}}{x_{ij}}$
is given by
\[
p(x \mid \theta)
=
\prod_{i=1}^{N-1} \prod_{j=i+1}^{N}
p_{ij}^{x_{ij}}
p_{ji}^{n_{ij} - x_{ij}}
=
\exp \left(
\sum_{k=1}^{N-1} \theta^k T_k(x)
\;-\;
\psi(\theta)
\right).
\]
Therefore, the Bradley--Terry model belongs to an exponential family.

Based on the observation $x$, the maximum likelihood estimator can be obtained
by finding $\theta \in \mathbb{R}^{N-1}$ that maximizes $p(x \mid \theta)$.
Since this model belongs to an exponential family,
the maximum likelihood estimator of the expectation parameters is
$\hat{\eta}_i = T_i(x)$.
However, it is difficult to explicitly express the transformation from $\eta$ to $\pi$.

On the other hand, for the e-affine coordinate
$\theta$, we can compute $\hat{\pi}$ from the maximum likelihood estimate
$\hat{\theta}$ by
\[
\pi_i = \frac{\exp(\theta^i)}
{1 + \sum_{k=1}^{N-1} \exp(\theta^k)}
\quad (i = 1, \dots, N-1).
\]
Moreover, since $\theta(S) = \mathbb{R}^{N-1}$,
there is no need to consider constraints on the parameter space.
Therefore, we perform optimization using e-geodesics.

For comparison, in the maximum likelihood estimation for the Bradley--Terry model,
we compare the number of iterations of
the MM algorithm and the exponentiated gradient method.
Starting from the initial value
$\left(\frac{1}{N}, \dots, \frac{1}{N}\right)$,
we iterate until the Euclidean norm of the gradient of the negative
log-likelihood function with respect to
$\pi_1, \dots, \pi_{N-1}$ becomes less than $10^{-5}$.
A commonly used MM algorithm
(e.g.\ \cite{lange2013}, p.\ 193) updates the parameters as follows.
\begin{enumerate}
    \item Set an initial estimate $\hat{\pi}_i^{(0)}$.
    \item Compute $\tilde{\pi}_i^{(l+1)}$ by
    \[
    \tilde{\pi}_i^{(l+1)}
    =
    \frac{\sum_{j \in \{1,\dots, N\} \setminus \{i\}} x_{ij}}
    {\sum_{j \in \{1,\dots, N\} \setminus \{i\}}
        \frac{n_{ij}}
        {\hat{\pi}_i^{(l)} + \hat{\pi}_j^{(l)}}}.
    \]
    \item Normalize
    \[
    \hat{\pi}_i^{(l+1)}
    = \frac{\tilde{\pi}_i^{(l+1)}}
    {\sum_{k=1}^N \tilde{\pi}_k^{(l+1)}}.
    \]
\end{enumerate}

We describe the setting of the numerical experiment.
We first consider the case $N = 3$ with
$x_{12} = 7, \
x_{13} = 8, \
x_{21} = 3, \
x_{23} = 5, \
x_{31} = 2, \
x_{32} = 5$.
The variable $lr$ denotes the step size for the exponentiated gradient method
and the e-geodesic method.
The MM algorithm does not use a step size.
When $lr = 0.01$, the results are shown in Table~\ref{BT_small_0.01}.
First, when $lr = 0.01$, the results are shown in Table~\ref{BT_small_0.01}.
\begin{table}[H]
    \centering
    \doublespacing
    \caption{Comparison of the number of iterations for each method ($lr = 0.01$)}
    \label{BT_small_0.01}
    \begin{tabular}{|c|c|}
        \hline
        Method & Number of iterations \\
        \hline
        MM algorithm & 20 \\
        Exponentiated Gradient & 84 \\
        e-Geodesic & 1468 \\
        \hline
    \end{tabular}
\end{table}
On the other hand, when $lr = 1.0$, the Exponentiated Gradient method suffers
from numerical overflow and fails to compute the estimate,
whereas the e-geodesic method significantly reduces the number of iterations,
as shown in Table~\ref{BT_small_1.0}.
\begin{table}[H]
    \centering
    \doublespacing
    \caption{Comparison of the number of iterations for each method ($lr = 1.0$)}
    \label{BT_small_1.0}
    \begin{tabular}{|c|c|}
        \hline
        Method & Number of iterations \\
        \hline
        MM algorithm & 20 \\
        Exponentiated Gradient & overflow \\
        e-Geodesic & 4 \\
        \hline
    \end{tabular}
\end{table}
From Table~\ref{BT_small_1.0}, we observe that the e-geodesic method converges faster
when $lr = 1.0$.
Next, we consider larger values of $N$ with $lr = 1.0$.
Since the Exponentiated Gradient method causes numerical overflow in this setting,
we compare only the MM algorithm and the e-geodesic method.

We set $N = 100$.
For each $1 \le i < j \le N$, the values of $n_{ij}$ are sampled from the uniform
distribution on $\{1, \dots, 1000\}$,
and the values of $x_{ij}$ are sampled from the uniform distribution
on $\{0, \dots, n_{ij}\}$.
To account for randomness, we generate 100 instances following the above setting,
and compute the average and standard deviation of the number of iterations required for convergence,
where convergence is defined as the Euclidean norm of the gradient
of the negative log-likelihood function with respect to
$\pi_1, \dots, \pi_{N-1}$ falling below $10^{-5}$.
The standard deviation values are rounded to the third decimal place.
The results are shown in Table~\ref{BT_large}.
\begin{table}[H]
    \centering
    \doublespacing
    \caption{Average and standard deviation of the number of iterations required for MLE in the Bradley--Terry model with $N = 100$ ($lr = 1.0$)}
    \label{BT_large}
    \begin{tabular}{|c|c|c|}
            \hline
            Method & Average iterations & Standard deviation \\
            \hline
            MM algorithm & 46.31 & 1.76 \\
            e-Geodesic & 3.11 & 0.31 \\
            \hline
    \end{tabular}
\end{table}
These results show that as $N$ increases,
the MM algorithm requires more iterations.
In contrast, the e-geodesic method still converges in only a small number of steps.
\end{example}

\vspace{0.5cm}

\begin{example}
\label{MLR}
As an example that utilizes the result of Corollary~\ref{kldiv}
for $h(\theta) \coloneqq \mathrm{KL}(\theta, q)$,
we consider variational inference for multinomial logistic regression.

Following Blundell et al.\ \cite{blundell2015} and Suyama \cite{suyama2017},
we describe the setting of multinomial logistic regression
and the application of gradient descent.

Let $\mathcal{Z} \coloneqq \{ z \in \{0,1\}^D \mid \sum_{i=1}^D z_i = 1 \}$.
For each explanatory variable vector $x \in \mathbb{R}^M$,
we assume that a response variable $y \in \mathcal{Z}$ is generated stochastically.

Suppose that we obtain $N$ observations $(x_1, y_1), \dots, (x_N, y_N)$, and define
\[
X =
\begin{pmatrix}
x_1^\top \\ \vdots \\ x_N^\top
\end{pmatrix},
\quad
Y =
\begin{pmatrix}
y_1^\top \\ \vdots \\ y_N^\top
\end{pmatrix}.
\]

In multinomial logistic regression, for a parameter matrix
$W \in \mathbb{R}^{M \times D}$,
the conditional distribution of $y_i$ given $x_i$ is modeled as
\begin{align}
p(Y \mid X, W)
= \prod_{i=1}^N \mathrm{Cat}\!\Bigl(
y_i \,\Big|\, \mathrm{SM}(W^\top x_i)
\Bigr),
\end{align}
where, for $a = (a_1, \dots, a_D)^\top \in \mathbb{R}^D$,
\[
\mathrm{SM}(a)
= \bigl(\mathrm{SM}_1(a), \dots, \mathrm{SM}_D(a)\bigr)^\top,
\]
with
\[
\mathrm{SM}_k(a)
= \frac{\exp(a_k)}{\sum_{j=1}^D \exp(a_j)},
\quad k = 1, \dots, D,
\]
and
\[
\mathrm{Cat}(y \mid s)
= \prod_{k=1}^D s_k^{\,y_k}.
\]

As a prior distribution for $W$, we assume independent Gaussian distributions
\begin{align}
p(W)
= \prod_{i=1}^{M} \prod_{j=1}^{D}
\mathcal{N}\!\left(w_{ij} \mid 0,\, \lambda^{-1}\right)
\end{align}
with precision parameter $\lambda > 0$.

Since exact Bayesian inference for $W$ is intractable,
we consider a variational approximation in which each element of $W$
is modeled as an independent Gaussian
\begin{align}
q(W \mid \mu, \sigma)
= \prod_{i=1}^{M} \prod_{j=1}^{D}
\mathcal{N}\!\left(w_{ij} \mid \mu_{ij},\, \sigma_{ij}^2\right),
\end{align}
where
$\mu = (\mu_{11}, \dots, \mu_{MD})^\top$
and
$\sigma = (\sigma_{11}, \dots, \sigma_{MD})^\top$.
The variational parameters $\mu$ and $\sigma$ are optimized by minimizing
\begin{align}
\mathrm{KL}\!\left(
q(W \mid \mu, \sigma),\;
p(W \mid X, Y)
\right).
\label{logkl}
\end{align}

Expanding the Kullback--Leibler divergence gives
\begin{align}
\mathrm{KL}\!\left(q(W \mid \mu, \sigma),\, p(W \mid X, Y)\right)
&=
\int \left\{
\log q(W \mid \mu, \sigma)
- \log p(Y \mid X, W)
- \log p(W)
\right\} q(W \mid \mu, \sigma)\, \mathrm{d}W
\notag \\
&\quad + \log p(Y \mid X),
\label{logkl2}
\end{align}
where $\log p(Y \mid X)$ is a constant independent of $W$.

Since the analytic computation of the gradient of \eqref{logkl2} is difficult,
we approximate the gradient using Monte Carlo integration.
Let $\Sigma = \mathrm{diag}(\sigma_1, \dots, \sigma_{MD})$
and let $\epsilon_k \sim \mathcal{N}(0, I_{MD})$.
We set
\[
\mathrm{vec}(W_k) = \mu + \Sigma \epsilon_k,
\]
and approximate the first term on the right-hand side of \eqref{logkl2} by
\begin{align}
\label{logkl4}
h(\mu, \sigma)
\coloneqq
\frac{1}{K} \sum_{k=1}^{K}
\left\{
\log q(W_k \mid \mu, \sigma)
-
\log p(Y \mid X, W_k)
-
\log p(W_k)
\right\},
\end{align}
and then compute its gradient.
The dependence on $X$ is omitted since $X$ is fixed.

It has long been common in statistics and econometrics to compute approximate
maximum-likelihood estimators and moment estimators by holding fixed
the realizations of random numbers across parameter values
(see e.g.\ \cite{GM1997}).
Recently, this method has been referred to as the
\emph{reparameterization trick} in machine learning
\cite{kingma2014, rezende2014}.

Noting that $W_k$ depends on $\mu$ and $\sigma$ through
$\mathrm{vec}(W_k) = \mu + \Sigma \epsilon_k$,
we compute the gradient of \eqref{logkl4}
and apply gradient descent to minimize $h(\mu, \sigma)$.

To avoid the positivity constraint $\sigma_j > 0$,
we introduce parameters $\rho_j$ such that
$\sigma_j = \log(1 + \exp(\rho_j))$,
and apply gradient descent to $\mu$ and $\rho$.

After obtaining the parameter estimates $\hat{\mu}$ and $\hat{\sigma}$,
we predict the response $y' \in \mathcal{Z}$ for a new
explanatory variable $x' \in \mathbb{R}^M$
by Monte Carlo approximation.
We draw samples
$W_l \sim q(W \mid \hat{\mu}, \hat{\sigma})$
for $l = 1, \dots, L$,
and determine the prediction by
\begin{align}
\label{pred}
i \in \underset{j \in \{1,\dots,D\}}{\operatorname{argmax}}
\left(
\frac{1}{L}
\sum_{l=1}^{L}
\mathrm{SM}_j(W_l^\top x')
\right),
\end{align}
and set
\begin{align}
\label{intkl3}
y' = e_i,
\end{align}
where $e_i$ is the $i$-th standard basis vector of $\mathbb{R}^D$
with its $i$-th element equal to $1$ and the others equal to $0$.

Next, we describe the optimization using the e- and m-geodesics.
Let $S$ be the family of multivariate normal distributions
with diagonal covariance matrices.
As described in Section~3, $S$ is an exponential family.
Using $S$, \eqref{logkl} can be reformulated as
\begin{align}
\label{logkl5}
\min_{r \in S}
\mathrm{KL}(r,\, p(W \mid X, Y)).
\end{align}

This problem is approximated by using \eqref{logkl4} as
\begin{align}
\label{logkl5a}
\min_{r \in S} \bar{h}(r),
\end{align}
where $\bar{h}(r) = h(\mu(r), \sigma(r))$,
and $\mu(q(W \mid \bar{\mu}, \bar{\sigma})) = \bar{\mu}$,
$\sigma(q(W \mid \bar{\mu}, \bar{\sigma})) = \bar{\sigma}$.

Let $q^* \in S$ be the solution of \eqref{logkl5}.
Since $q^*$ is close to $p(W \mid X, Y)$,
we define
$f(r) \coloneqq \mathrm{KL}(r,\, q^*)$,
and we have
$\operatorname{grad} f(r)
\simeq
\operatorname{grad}\mathrm{KL}(r,\, p(W \mid X, Y))$.
Together with
$\operatorname{grad}\bar{h}(r)
\simeq
\operatorname{grad}\mathrm{KL}(r,\, p(W \mid X, Y))$,
it follows that
$\operatorname{grad}\bar{h}(r)
\simeq
\operatorname{grad} f(r)$.

Now take an arbitrary $p \in S$.
From Corollary~\ref{kldiv}, if an e-geodesic
$\gamma^{(\mathrm{e})}$ defined on $[0,1]$
satisfies
$\gamma^{(\mathrm{e})}(0) = p$,
$\dot{\gamma}^{(\mathrm{e})}(0) = -\operatorname{grad} g(p)$,
then $\gamma^{(\mathrm{e})}(1) = q^*$ holds.
Therefore, for any $p \in S$, if an e-geodesic $\gamma^{(\mathrm{e})}$
defined on $[0,1]$ satisfies
$\gamma^{(\mathrm{e})}(0) = p$ and
$\dot{\gamma}^{(\mathrm{e})}(0) = -\operatorname{grad}\bar{h}(p)$,
then we expect that
$
\gamma^{(\mathrm{e})}(1)
\simeq
q^*$.
This suggests that an e-geodesic update with step size $1$
can achieve a good approximation in a single iteration.

The e-affine coordinate system $\theta$ on $S$ is given, for each $p \in S$, by
\begin{align*}
\theta(p)
=
\left(
\frac{\mu_1}{\sigma_1^2}, \dots, \frac{\mu_{MD}}{\sigma_{MD}^2},
-\frac{1}{2\sigma_1^2}, \dots, -\frac{1}{2\sigma_{MD}^2}
\right),
\stepcounter{equation}\tag{\theequation}
\end{align*}
where $\mu = (\mu_1,\dots,\mu_{MD})$ and $\sigma = (\sigma_1,\dots,\sigma_{MD})$.
The m-affine coordinate system is given by
\begin{align*}
\eta(p)
=
\left(
\mu_1,\dots,\mu_{MD},
\mu_1^2+\sigma_1^2,\dots,\mu_{MD}^2+\sigma_{MD}^2
\right).
\stepcounter{equation}\tag{\theequation}
\end{align*}

It can be confirmed that both $\theta(S)$ and $\eta(S)$ are convex.
In the case of the diagonal Gaussian family $S$, as in the categorical distribution case,
we halve the step size $t$ repeatedly
until $\eta(p) - t \nabla_{\theta} f(p) \in \eta(S)$ is satisfied for the m-geodesic update,
and until $\theta(p) - t \nabla_{\eta} f(p) \in \theta(S)$ is satisfied for the e-geodesic update.

Next, we describe the settings of the numerical experiments.
A dataset $(X, Y)$ is randomly generated,
and the details of the generation procedure are provided in the appendix.

Among the generated pairs $(X, Y)$, $30\%$ are used as test data,
and the remaining $70\%$ are used as training data.
The initial values of $\mu$ and $\rho$ are generated independently for each component
from the standard normal distribution.
We set $L = 10$ and perform only a single iteration of the optimization.
We then compute the prediction $Y^*$ according to the procedure
and evaluate the accuracy by
$(\text{number of correct predictions})/N$.
We refer to this value as the accuracy.

Since the performance may depend on the initialization of $(\mu, \rho)$
and on the generated dataset $(X, Y)$,
we repeat the procedure $100$ times
and compute the mean and standard deviation of the accuracy
for both the training data and the test data.

We examine the following settings.
\[
(N, M, D) \in \{(200, 5, 3),\ (1000, 10, 2)\},
\quad
lr \in \{0.01, 1, 100\},
\quad
\lambda \in \{0.01, 1, 100\}.
\]
There are $18$ combinations in total.
If the chosen step size $lr$ does not satisfy the sufficient condition
for extending the geodesic as described above,
we repeatedly halve the step size \,$lr \leftarrow lr/2$\, until the condition is satisfied.
The number of random samples used for Monte Carlo approximation is set to $K = 10000$.

The results are shown in Tables~\ref{fo_table} and \ref{st_table}.
The accuracy and the standard deviation are rounded to the third decimal place.
The gradient descent method in $(\mu,\rho)$ is denoted by ``gradient'',
the e-geodesic method by ``e-geodesic'',
and the m-geodesic method by ``m-geodesic''.
\begin{table}[H]
    \renewcommand{\arraystretch}{0.8}
    \centering
    \doublespacing
    \caption{Mean accuracy and standard deviation for $K=10000$ and $(N, M, D) = (200, 5, 3)$. 
    The initial step size $lr$ is halved until a valid geodesic is defined.}
    \label{fo_table}
    \begin{tabular}{|c|c|c|c|c|c|c|}
    \hline
    $\lambda$ & $lr$  & Method & \multicolumn{2}{c|}{Training data} & \multicolumn{2}{c|}{Test data} \\ \cline{4-7} 
     & & & Mean & Std.\ dev. & Mean & Std.\ dev. \\ \hline
    \multirow{3}{*}{0.01} & 0.01 & gradient & 0.78 & 0.10 & 0.75 & 0.12 \\ 
     & & e-geodesic & 0.63 & 0.17 & 0.60 & 0.17 \\ 
     & & m-geodesic & 0.53 & 0.17 & 0.51 & 0.18 \\ \hline
    \multirow{3}{*}{0.01} & 1.0 & gradient & 0.80 & 0.11 & 0.79 & 0.11 \\ 
     & & \underline{e-geodesic} & 0.82 & 0.13 & 0.81 & 0.13 \\ 
     & & m-geodesic & 0.52 & 0.18 & 0.51 & 0.19 \\ \hline
    \multirow{3}{*}{0.01} & 100.0 & gradient & 0.80 & 0.11 & 0.79 & 0.11 \\ 
     & & e-geodesic & 0.78 & 0.13 & 0.78 & 0.14 \\ 
     & & m-geodesic & 0.52 & 0.18 & 0.50 & 0.18 \\ \hline\hline
    \multirow{3}{*}{1.0} & 0.01 & gradient & 0.78 & 0.10 & 0.75 & 0.12 \\ 
     & & e-geodesic & 0.63 & 0.17 & 0.60 & 0.17 \\ 
     & & m-geodesic & 0.53 & 0.18 & 0.50 & 0.18 \\ \hline
    \multirow{3}{*}{1.0} & 1.0 & gradient & 0.80 & 0.11 & 0.79 & 0.11 \\ 
     & & \underline{e-geodesic} & 0.82 & 0.12 & 0.81 & 0.13 \\ 
     & & m-geodesic & 0.52 & 0.18 & 0.50 & 0.19 \\ \hline
    \multirow{3}{*}{1.0} & 100.0 & gradient & 0.79 & 0.12 & 0.79 & 0.11 \\ 
     & & e-geodesic & 0.78 & 0.13 & 0.78 & 0.13 \\ 
     & & m-geodesic & 0.52 & 0.19 & 0.50 & 0.19 \\ \hline\hline
    \multirow{3}{*}{100.0} & 0.01 & gradient & 0.78 & 0.12 & 0.78 & 0.11 \\ 
     & & e-geodesic & 0.56 & 0.18 & 0.54 & 0.17 \\ 
     & & m-geodesic & 0.43 & 0.18 & 0.42 & 0.18 \\ \hline
    \multirow{3}{*}{100.0} & 1.0 & gradient & 0.51 & 0.18 & 0.52 & 0.18 \\ 
     & & \underline{e-geodesic} & 0.84 & 0.10 & 0.83 & 0.10 \\ 
     & & m-geodesic & 0.44 & 0.18 & 0.42 & 0.18 \\ \hline
    \multirow{3}{*}{100.0} & 100.0 & gradient & 0.51 & 0.18 & 0.52 & 0.18 \\ 
     & & e-geodesic & 0.79 & 0.14 & 0.79 & 0.14 \\ 
     & & m-geodesic & 0.43 & 0.18 & 0.42 & 0.18 \\ \hline
    \end{tabular}
\end{table}
\begin{table}[H]
    \renewcommand{\arraystretch}{0.8}
    \centering
    \doublespacing
    \caption{Mean accuracy and standard deviation for $K=10000$ and $(N, M, D) = (500, 3, 7)$.
    The initial step size $lr$ is halved until a valid geodesic is defined.}
    \label{st_table}
    \begin{tabular}{|c|c|c|c|c|c|c|}
    \hline
    $\lambda$ & $lr$ & Method & \multicolumn{2}{c|}{Training data} & \multicolumn{2}{c|}{Test data} \\ \cline{4-7}
     & & & Mean & Std.\ dev. & Mean & Std.\ dev. \\ \hline
    \multirow{3}{*}{0.01} & 0.01 & gradient & 0.60 & 0.09 & 0.58 & 0.10 \\
     & & e-geodesic & 0.33 & 0.13 & 0.32 & 0.13 \\
     & & m-geodesic & 0.21 & 0.11 & 0.21 & 0.11 \\ \hline
    \multirow{3}{*}{0.01} & 1.0 & gradient & 0.57 & 0.08 & 0.56 & 0.08 \\
     & & \underline{e-geodesic} & 0.75 & 0.07 & 0.74 & 0.08 \\
     & & m-geodesic & 0.21 & 0.11 & 0.21 & 0.11 \\ \hline
    \multirow{3}{*}{0.01} & 100.0 & gradient & 0.57 & 0.08 & 0.55 & 0.08 \\
     & & e-geodesic & 0.65 & 0.13 & 0.64 & 0.15 \\
     & & m-geodesic & 0.21 & 0.11 & 0.22 & 0.11 \\ \hline\hline
    \multirow{3}{*}{1.0} & 0.01 & gradient & 0.60 & 0.09 & 0.58 & 0.10 \\
     & & e-geodesic & 0.33 & 0.13 & 0.32 & 0.13 \\
     & & m-geodesic & 0.21 & 0.11 & 0.21 & 0.11 \\ \hline
    \multirow{3}{*}{1.0} & 1.0 & gradient & 0.57 & 0.08 & 0.55 & 0.08 \\
     & & \underline{e-geodesic} & 0.76 & 0.07 & 0.74 & 0.08 \\
     & & m-geodesic & 0.21 & 0.11 & 0.21 & 0.11 \\ \hline
    \multirow{3}{*}{1.0} & 100.0 & gradient & 0.56 & 0.08 & 0.55 & 0.08 \\
     & & e-geodesic & 0.66 & 0.13 & 0.65 & 0.14 \\
     & & m-geodesic & 0.21 & 0.11 & 0.22 & 0.11 \\ \hline\hline
    \multirow{3}{*}{100.0} & 0.01 & gradient & 0.56 & 0.08 & 0.55 & 0.09 \\
     & & e-geodesic & 0.28 & 0.13 & 0.28 & 0.13 \\
     & & m-geodesic & 0.17 & 0.10 & 0.17 & 0.10 \\ \hline
    \multirow{3}{*}{100.0} & 1.0 & gradient & 0.31 & 0.12 & 0.30 & 0.12 \\
     & & \underline{e-geodesic} & 0.78 & 0.06 & 0.77 & 0.06 \\
     & & m-geodesic & 0.17 & 0.10 & 0.17 & 0.11 \\ \hline
    \multirow{3}{*}{100.0} & 100.0 & gradient & 0.31 & 0.12 & 0.30 & 0.12 \\
     & & e-geodesic & 0.71 & 0.14 & 0.70 & 0.14 \\
     & & m-geodesic & 0.18 & 0.10 & 0.17 & 0.11 \\ \hline
    \end{tabular}
\end{table}
From Tables~\ref{fo_table} and \ref{st_table},
we observe that the e-geodesic update with step size $lr=1.0$
consistently achieves the highest accuracy among all methods.

Increasing the number $K$ of random samples for Monte Carlo approximation
results in higher computational cost,
but it is expected to make the approximation in the proposed method more accurate.

On the other hand, the accuracy of the m-geodesic method is relatively low.
This is because the domain of the m-affine coordinates in the multivariate diagonal Gaussian family is restricted by parameter constraints, namely
$\{\eta \in \mathbb{R}^{2n} \mid \eta_i^2 < \eta_{n+i} \ (i = 1, \dots, n)\}$,
whereas the e-affine coordinates can take any value in $\mathbb{R}^n \times {\mathbb{R}_{<0}}^n$.
As a result, while the e-geodesic update can often use the initial step size $lr$ without modification,
the m-geodesic update requires repeated halving of the step size,
i.e.\ $lr \leftarrow lr/2$, in order to remain in the parameter domain.
Thus, only a very small update can be taken in a single iteration,
which explains the inferior performance of the m-geodesic method.

\end{example}

\section{Conclusion}

We introduced geodesic-based optimization methods on dually flat spaces,
which are fundamental geometric structures in information geometry.
The most important application of this framework is maximum likelihood estimation
on exponential families, where the geometry is naturally induced
by the e- and m-connections.

We developed optimization algorithms that update parameters along e- and m-geodesics
in the statistical manifold.
For exponential families, this enables efficient movement toward the maximum likelihood estimator
while respecting the intrinsic geometric structure of the parameter space.
Furthermore, for mixture families, which are also dually flat,
we demonstrated that the proposed geodesic methods outperform the exponentiated gradient method
in terms of the number of iterations required for convergence.

We also applied the proposed method to variational inference
for multinomial logistic regression.
The e-geodesic update with step size $1$ achieved highly accurate estimates
in a single iteration, improving predictive performance
compared with conventional gradient-based optimization.
The effectiveness of the proposed approach is attributed to the relationship
between the objective function and the geometric structure of the statistical manifold.
For example, maximum a posteriori estimation or regularized log-likelihood maximization
still retain geometric alignment between the log-likelihood term
and the manifold structure, and therefore are expected to benefit from
geodesic optimization as well.

\appendix
\section*{Appendix}

\section{Data generation procedure for $X$ and $Y$ in Example \ref{MLR}}
The dataset $(X,Y)$ used in Example \ref{MLR} was generated by the
\texttt{make\_classification} function provided in the Python library
\texttt{scikit-learn}.
The internal mechanism of this function can be described as follows.

The index set $\{1,\dots,N\}$ is partitioned into $D$ subsets
$S_1,\dots,S_D$ in order, so that each subset has either
$\lceil N/D \rceil$ or $\lfloor N/D \rfloor$ elements.
Let $r = N \bmod D$, and $\lceil N/D \rceil$ elements are assigned to
$S_1,\dots,S_r$ and $\lfloor N/D \rfloor$ elements to
$S_{r+1},\dots,S_D$.

The standard basis vectors $e_1,\dots,e_D \in \mathbb{R}^D$ are used,
and $y_i$ is set to $e_j$ for $i \in S_j$, so that each $y_i$
represents its cluster label as a one-hot vector.

Then, $D$ distinct vertices from the hypercube $[-1.5,1.5]^M$ are chosen
as the cluster centers $c_1,\dots,c_D \in \mathbb{R}^M$.
For each $j$, a random matrix $A_j \in \mathbb{R}^{M\times M}$ is generated,
whose entries are independently drawn from the uniform distribution on $[-1,1)$.
Each sample is drawn independently as
$ x_i \sim \mathcal{N}(c_j, A_j^\top A_j)
\qquad (i \in S_j)$.

With probability $0.03$, $y_i$ is replaced by a randomly selected one-hot
vector from $\{e_1,\dots,e_D\}$, which introduces label noise.

Finally, the indices of all pairs $(x_i, y_i)$ are randomly shuffled,
and the feature coordinates of each $x_i$ are also randomly permuted.

\section*{Acknowledgments}
This work was supported by JSPS KAKENHI Grant Numbers JP22H00510 and 25H01156.

\end{document}